\documentclass[conference,final,twocolumn]{IEEEtran}
\usepackage{amsmath,amssymb,amsfonts,mathrsfs,bm}
\usepackage{amstext}
\usepackage{upgreek}
\usepackage{multicol}
\usepackage{indentfirst}
\usepackage{graphicx}
\usepackage{paralist}
\usepackage{cite}
\usepackage{citesort}
\usepackage{booktabs}
\usepackage{multirow}
\newtheorem{proposition}{Proposition}
\newtheorem{proof}{Proof}
\IEEEoverridecommandlockouts

\begin{document}
\title{Energy Costs for Traffic Offloading by Cache-enabled D2D Communications}
\author{
\IEEEauthorblockN{{Binqiang Chen and Chenyang Yang}}
\IEEEauthorblockA{Beihang University, Beijing  China
\\
Email: chenbq@buaa.edu.cn,\ 
cyyang@buaa.edu.cn}
\thanks{This work was supported by National Natural Science Foundation of China (NSFC) under Grant 61120106002 and National Basic Research Program of China, 973 Program 2012CB316003.}
%
}

\maketitle

\begin{abstract}
Device-to-Device (D2D) communications can offload the traffic and boost the throughput
of cellular networks. By caching files at users, content delivery traffic
can be offloaded via D2D links, if a helper user are willing to send the cached file to the user who requests the file. Yet it is unclear how much energy
needs to be consumed at a helper user to support the traffic offloading. In this paper, we strive to find the minimal energy consumption required at a helper user to  maximize the amount of offloaded traffic. To this end, we introduce a user-centric proactive caching policy that can control the energy cost for a helper user to convey a file,  and then optimize the caching policy to maximize the offloaded traffic. To reduce the energy during transmission, we optimize the transmit power to minimize the energy consumed by a helper to send a file. We analyze the relationship between traffic offloading and energy cost with the optimized
caching policy and transmit power by numerical and simulation results, which demonstrate that a significant amount of traffic  can be offloaded with affordable energy costs.
\end{abstract}

\begin{keywords}
Caching, D2D, Traffic offloading, Energy cost.
\end{keywords}

\section{Introduction}

Device-to-device (D2D) communications is a promising approach to
offload the traffic and boost the throughput of cellular networks, whose typical user cases include content distribution, gaming and relaying \cite{Survey.D2D}.
%
%

The lion's share of cellular traffic is video distribution, which will generate more than $69\%$ of
mobile data traffic by 2019 \cite{CISCO}. Nonetheless, a
large amount of traffic is generated by a few contents. On the other hand, the storage of mobile devices grows rapidly with low
cost. Motivated by these facts, cache-enabled D2D communications was proposed in \cite{Mo.Mag13,Golrezaei.TWC} to offload the traffic of video transmission. Without caching at the devices, the users need to fetch their desired content via base station (BS) from a
remote server. By pre-caching popular files at users during the off-peak time according to their possible interests, the desired file of a user can be transmitted via D2D links by the users in proximity when the actual request arrives. Such a proactive caching policy largely alleviates the burden at the BSs during the peak time. To maximize the traffic offloaded by D2D links, the policy to proactively cache popular files at mobile devices was optimized in \cite{JMY.JSAC}, and a distributed reactive caching mechanism was
designed in \cite{JJJ.JSAC}.

Different from the D2D use cases of supporting peer-to-peer services such as gaming, where the users acting as transmitters are willing to send messages to the destination users \cite{Andreev.ICM}, offloading the content delivery traffic by cache-enabled D2D communications needs the help of other users who
have cached the desired files but are not obligated to act as helpers to transmit the files. Due to the limited battery capacity, a natural question from a helper user in such a network is: ``why should I spend energy of my battery to provide you with faster video download?
\cite{Mo.Mag13}"  This makes the energy consumption at a helper user a big concern in D2D communications with caching. Considering the large
potential in traffic offloading by cache-enabled D2D communications, it is an urgent task to evaluate the energy consumed at a helper user to deliver the files. In particular, characterizing the required energy consumption at a helper to maximize the traffic offloading for the network is an important problem.

In previous research efforts \cite{Mo.Mag13,Golrezaei.TWC,JMY.JSAC,JJJ.JSAC}, the energy costs at helper users are never considered. On one hand,  maximal transmit power  is always used to deliver the files. On the other hand, by dividing the users in a cell into clusters and assuming that only the users
within a cluster can establish D2D links, the optimal caching policy  proposed in
\cite{JMY.JSAC} cannot maximize the offloaded traffic in a cell, and the energy cost for a helper is high. This is because
when a user is only allowed to communicate with the users within its cluster, it can not fetch the file from the nearest helper in adjacent cluster who can convey the file with low transmit power.

In this paper, we strive to investigate the energy cost at a helper user spent to maximize the offloaded  traffic. Toward this goal, we first introduce a user-centric proactive caching policy, where  only the users within a \emph{collaboration distance} of a user can serve as helpers. We optimize the caching policy to maximize traffic
offloading with a given collaboration distance and the user demands statistics. When the collaboration distance is large, the probability that the users can fetch their desired contents via D2D links is high, and then more traffic can be offloaded. However, since the possible D2D link distance increases, the energy cost of a helper user also grows. By setting the collaboration distance according to the affordable energy cost of each helper, such a caching policy helps reduce the energy for transmission. Then, we optimize the transmit power for a helper to send the requested file to minimize the energy spent by the helper. Finally, we analyze the tradeoff between traffic offloading
and energy costs for the D2D network with optimized caching policy and transmit power with numerical and simulation results.

\section{System Model}

Consider one cell in cellular networks where user location follows a
Poisson Point Process (PPP) with density $\lambda$.
Each single antenna user has local cache to store files and acts as helper. If a helper establishes a D2D link with a D2D receiver (DR), it becomes a D2D transmitter (DT). For notational
simplicity, we assume that each user stores one file as in \cite{Golrezaei.TWC}. The BS is aware of the cached files
of the users and coordinates the D2D communications.

We consider a static content catalog including $N_f$ files that all users in the cell may request, where the
files are indexed according to the popularity, and the 1st file is the most popular file. Each file is with size of
$F$ Bits. Each user  requests a file from the catalog  independently. The probability that the $i$th file is
requested by a user is assumed to follow Zipf distribution, which is
\begin{equation}
\label{equ.p_r}
p_r(i)=i^{-\beta}/\sum_{k=1}^{N_f}k^{-\beta},
\end{equation}
where $\sum_{i=1}^{N_f}p_r(i)=1$, and the parameter $\beta$ reflects the popularity of the files \cite{Zipf99}.


\begin{figure}[!htb]
  \centering
  \includegraphics[width=0.25\textwidth]{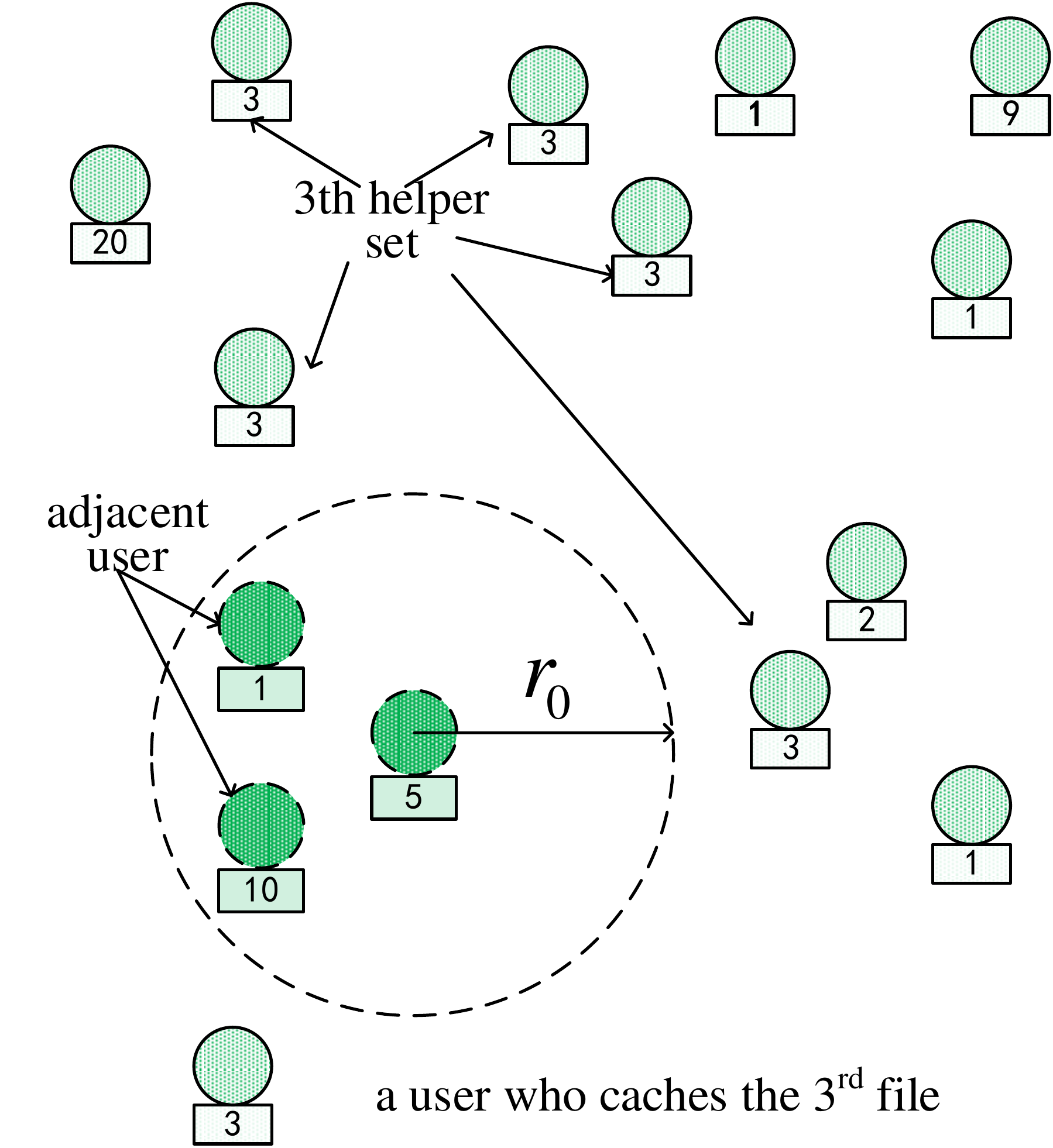}\\
  \caption{Illustration for a cache-enabled D2D network}\label{fig.2}
  \vspace{-0.50cm}
\end{figure}

Since transmitting to a far away DR spends more energy of a DT, we introduce a user-centric caching policy to control the energy for transmission: a helper user will send the file it cached to the user requesting the file only if their distance is smaller than a given value
$r_{c}$, called \emph{collaboration distance}. We consider a probabilistic caching policy, where each user caches a
file according to a $r_c$-dependent caching distribution, i.e., the probability that the $i$th file is  cached at users, $i=1,\cdots,N_f$. All users in the cell
cached with the $i$th file constitute a user set, called the $i$th \emph{helper set}. In practice,
the files can be proactively downloaded by the BS during the off-peak time.

The  users with distance $r$ less than $r_c$  are called \emph{adjacent
users}, as shown in Fig. \ref{fig.2}. If a user can find its requested file in the local caches of its adjacent users,
a D2D link will be established between the user and its nearest adjacent user cached with the requested file to convey
the file. Otherwise, the user needs to fetch the file from the BS. The probability that the requests from the users
can be served via D2D links is called traffic \emph{offloading ratio}, which reflects how much traffic can be
offloaded by D2D communications.


\section{Optimal Caching policy}
In this section, we optimize the caching distribution to maximize the offloading ratio.

Denote the probability that the $i$th file is cached at users as $p_c(i)$. Then, the locations of the users who belong to  the $i$th \emph{helper set} follow a PPP distribution with
density $\lambda p_c(i)$ \cite{SKM.PPP}. Thus, the probability that a user requesting the $i$th file can
find its desired file in the cache of any user within the collaboration
distance $r_{c}$ is
\begin{equation} \label{equ.p_f}
\begin{split}
p_f(i) =1-e ^{-\lambda p_c(i) \pi r_{c}^2}.
\end{split}
\end{equation}
Then, the offloading ratio with given caching distribution and collaboration
distance  can be obtained  from (\ref{equ.p_r}) and
(\ref{equ.p_f}) as
\begin{equation} \label{equ.p_o}
{p}_o(p_c(i), r_{c})=\sum_{i=1}^{N_f}p_r(i)(1-e ^{-\lambda p_c(i) \pi r_{c}^2}).
\end{equation}

The optimal caching distribution that maximizes the offloading ratio can be found from the following problem
\begin{equation}
\label{equ.opt1}
\begin{aligned}
&\max_{p_c(i)} \,\, {p}_o(p_c(i), r_{c})\\
&s.t.\quad
\sum_{i=1}^{N_f} p_c(i)=1, \quad p_c(i) \geq 0, \quad i=1,\cdots, N_f.
\end{aligned}
\end{equation}

Because the objective function is the sum of $N_f$ exponential functions with linear constraints, this problem is
convex  \cite{SL.OPT}.
According to the Karush-Kuhn-Tucker (KKT) conditions of this problem, the optimal caching distribution should satisfy the
following conditions
\begin{equation} \label{equ.Lag_3}
\begin{split}
& p^*_c(i) = \left[\frac{1}{\lambda \pi r_{c}^2}\ln(p_r(i))-\frac{1}{\lambda \pi r_{c}^2}\ln(\frac{-\mu}{\pi \lambda
r_{c}^2})\right]^+, \forall i, \\
& \sum_{i=1}^{N_f} p^*_c(i) =1,
\end{split}
\end{equation}
where  $[x]^+ = \max(x,0)$.

\begin{proposition} \label{p:1}
If $\frac{(N_f)^{N_f}}{N_f!} < e^{\frac{\lambda \pi r_{c}^2}{\beta}}$, then the optimal caching distribution will be
\begin{equation} \label{p:1.1}
p^*_c(i) = \frac{\beta}{\lambda \pi r_{c}^2 N_f}\sum_{j=1}^{N_f}\ln(\frac{j}{i})+\frac{1}{N_f}.
\end{equation}
Otherwise, the optimal caching distribution will be
\begin{eqnarray}
\label{equ.p:1.2}
p^*_c(i)=
\begin{cases}
\frac{\beta}{\lambda \pi r_{c}^2 i^*}\sum_{j=1}^{i^*}\ln(\frac{j}{i})+\frac{1}{i^*}, &i \leq i^* \\
0, &i^*<i \leq N_f\\
\end{cases},
\end{eqnarray}
where $i^*$ satisfies $\frac{(i^*+1)^{i^*}}{i^*!} \geq e^{\frac{\lambda \pi r_{c}^2}{\beta}}$,
$\frac{(i^*)^{i^*}}{i^*!} < e^{\frac{\lambda \pi r_{c}^2}{\beta}}$ and $\frac{\lambda \pi r_{c}^2}{\beta} -1 \leq i^*
\leq \frac{\lambda \pi r_{c}^2}{\beta} + \ln(\sqrt{2\pi N_f}) + 1$.
\end{proposition}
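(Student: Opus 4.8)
The plan is to solve the KKT system (\ref{equ.Lag_3}) explicitly by first determining which files receive a positive cache probability, and then pinning down the Lagrange multiplier from the normalization constraint. Writing $a:=\lambda\pi r_{c}^{2}$ for brevity, stationarity gives $p_{c}^{*}(i)=\frac{1}{a}[\ln p_{r}(i)-\ln(-\mu/a)]^{+}$. Since $p_{r}(i)=i^{-\beta}/\sum_k k^{-\beta}$ is strictly decreasing in $i$, the bracketed expression is strictly decreasing in $i$, so the optimal caching probabilities are nonincreasing and the support $\{i:p_{c}^{*}(i)>0\}$ is necessarily an initial segment $\{1,\dots,i^{*}\}$ of the most popular files. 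This is the structural observation that organizes the whole argument: it reduces the problem to determining the single integer $i^{*}$, and the convexity already established in the excerpt guarantees that any KKT point so obtained is the global optimum.

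For the first (all-files-cached) case, I would assume $p_{c}^{*}(i)>0$ for every $i$, drop the $[\cdot]^{+}$, sum the stationarity equations over $i=1,\dots,N_{f}$, and use $\sum_i p_c^*(i)=1$ to solve $\ln(-\mu/a)=\frac{1}{N_f}\sum_j\ln p_r(j)-\frac{a}{N_f}$. Substituting this back and using $\ln(p_{r}(i)/p_{r}(j))=\beta\ln(j/i)$ collapses the normalization constant and yields (\ref{p:1.1}) directly. The validity condition for this case is exactly that the smallest probability, $p_{c}^{*}(N_{f})$, stays nonnegative; writing $\sum_{j=1}^{N_f}\ln(j/N_f)=\ln(N_f!/N_f^{N_f})$ turns $p_{c}^{*}(N_{f})\ge 0$ into $N_f^{N_f}/N_f!\le e^{a/\beta}$, which is the threshold stated in the proposition.

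When this threshold is violated some files drop out, and I would repeat the identical computation with $N_{f}$ replaced by the support size $i^{*}$, immediately producing (\ref{equ.p:1.2}). The integer $i^{*}$ is then characterized as the largest $k$ for which caching the top $k$ files keeps $p_{c}^{*}(k)>0$; by the same manipulation this is the condition $k^{k}/k!<e^{a/\beta}$, so $i^{*}$ is fixed by $\frac{(i^{*})^{i^{*}}}{i^{*}!}<e^{a/\beta}\le\frac{(i^{*}+1)^{i^{*}+1}}{(i^{*}+1)!}=\frac{(i^{*}+1)^{i^{*}}}{i^{*}!}$, which matches the two defining inequalities in the statement once the factor $i^{*}+1$ is cancelled in the last equality.

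The remaining and most delicate step is the two-sided estimate on $i^{*}$, where I would invoke the Stirling bounds $\sqrt{2\pi n}(n/e)^{n}\le n!\le\sqrt{2\pi n}(n/e)^{n}e^{1/(12n)}$. Feeding the lower bound on $i^{*}!$ into $\frac{(i^{*})^{i^{*}}}{i^{*}!}<e^{a/\beta}$ gives $e^{i^{*}}/(\sqrt{2\pi i^{*}}\,e^{1/(12i^{*})})<e^{a/\beta}$, hence $i^{*}<a/\beta+\ln\sqrt{2\pi i^{*}}+\tfrac{1}{12i^{*}}\le a/\beta+\ln\sqrt{2\pi N_f}+1$ after using $i^{*}\le N_f$; feeding the Stirling upper bound into $\frac{(i^{*}+1)^{i^{*}+1}}{(i^{*}+1)!}\ge e^{a/\beta}$ gives $e^{i^{*}+1}/\sqrt{2\pi(i^{*}+1)}\ge e^{a/\beta}$, hence $i^{*}\ge a/\beta-1$. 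Substituting $a=\lambda\pi r_{c}^{2}$ recovers the claimed interval. I expect the bookkeeping of which Stirling inequality to apply on each side, so that the inequalities point the right way, to be the one spot that needs real care; everything else is an algebraic consequence of the water-filling structure.
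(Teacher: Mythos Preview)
Your proposal is correct and follows essentially the same route as the paper: identify the water-filling structure so the support is an initial segment $\{1,\dots,i^*\}$, eliminate the multiplier via the normalization constraint to obtain the closed forms, read off the threshold condition from $p_c^*(N_f)\ge 0$ (respectively from $p_c^*(i^*)>0$ and $p_c^*(i^*+1)\le 0$), and finish with Stirling's bounds to box in $i^*$. One small bookkeeping slip to watch when you write it up: in the Stirling step you have the labels ``lower'' and ``upper'' swapped (to get $e^{i^*}/(\sqrt{2\pi i^*}\,e^{1/(12i^*)})<e^{a/\beta}$ you need the \emph{upper} bound on $i^*!$, and for the other direction the \emph{lower} bound on the factorial), though your displayed formulas already use the correct inequalities.
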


\begin{proof}
See Appendix A
\end{proof}


With Prop. \ref{p:1}, the optimal caching distribution  $p^*_c(i), i=1,\cdots, N_f$ can be obtained efficiently, which depends on the collaboration
distance $r_c$, user density $\lambda$, as well as content statistics $N_f$ and $\beta$. With the optimized caching distribution, each user can randomly select a file to cache according to the probability $p^*_c(i)$. Then, the traffic in the network can be maximally offloaded. Because such a caching policy depends on $r_c$, the possible D2D link distance and hence the energy spent by a helper can be controlled by the pre-determined value of collaboration
distance.

\section{Energy Consumption for  Traffic Offloading}
In this section, we investigate the energy consumed by each DT for offloading the traffic by the cache-enabled D2D communications. To this end, we first optimize the transmit power  for conveying a file to minimize the energy consumption at each DT. Then, we evaluate  the average energy consumed at each DT to transmit a file via D2D links based on the optimal transmit power and optimal
caching policy.

Once a D2D link is established, the DT can transmit its cached file to the DR who requests the file. Considering the
random user requests, random caching and random user locations, it is reasonable to treat the interference among the D2D links
as noise. Then,
 the signal to interference plus noise ratio (SINR) at the DR can be expressed as
\begin{equation} \label{equ.gamma_DR}
\gamma(r) = \frac{P_t|h|^2r^{-\alpha}}{\sigma_0^2},
\end{equation}
where $P_t$ is the transmit power at the DT, $\sigma_0^2$ is the variance of white
Gaussian noise and interference, $h$ and $r$ are respectively the channel coefficient and distance
between the DT and the DR with $h$ following  zero mean complex Gaussian distribution with unit variance, and
$\alpha$ is the path loss exponent.

Considering that $1+\gamma(r)$ approximately follows Gamma distribution \cite{Gamma.2011}, the average data rate with respect to small scale channel fading can be derived as
\begin{equation} \label{equ.gamma_DR2}
\bar{R}(r) =\mathbb{E} \{W\log_2(1+\frac{P_t|h|^2r^{-\alpha}}{\sigma_0^2})\} \approx W\log_2(1+\frac{P_tr^{-\alpha}}{\sigma_0^2}),
\end{equation}
where $W$ is the bandwidth.

By using the first order approximation \cite{approx}, the average time to convey a file of size $F$ can be approximated as
$F/\bar{R}(r)$. Then, the energy consumed  to transmit a file via a D2D link with distance $r$ can be approximated as
\begin{equation} \label{ECr}
E_c(r) \approx \frac{F}{W\log_2(1+\frac{P_t(r)r^{-\alpha}}{\sigma_0^2})}\left(\frac{1}{\eta}P_t(r)+P_c\right),
\end{equation}
where $P_c$ is the circuit power consumed at the DT, and  $\eta$ is the power amplifier efficiency.

To minimize the energy consumption, the optimal transmit power at the DT can be
obtained from the following problem,
\begin{equation}
\label{equ.opt2}
\begin{aligned}
&\min_{P_t(r)}  \,\, \frac{F}{W\log_2(1+\frac{P_t(r)r^{-\alpha}}{\sigma_0^2})}\left(\frac{1}{\eta}P_t(r)+P_c\right)\\
&s.t.\quad
0<P_t(r)\leq P_{\max},
\end{aligned}
\end{equation}
where $P_{\max}$ is the maximal transmit power of the DT.

\begin{proposition} \label{p:2}
Denote $y=1+P_t(r)\frac{r^{-\alpha}}{\sigma_0^2}$, $y_0=
1+P_{max}\frac{r^{-\alpha}}{\sigma_0^2}$, and $\epsilon = \frac{r^{-\alpha}\eta P_c}{\sigma_0^2}-1$. If $(\frac{y_0}{e})^{y_0} < 2^{{\varepsilon}/{\ln2}}$, then the optimal solution of problem  (\ref{equ.opt2}) will be
$P^*_t(r) = P_{\max}$. Otherwise, $P^*_t(r)$ will satisfy the following condition
\begin{equation}
(\frac{y}{e})^y = 2^{{\varepsilon}/{\ln2}}.
\end{equation}
\end{proposition}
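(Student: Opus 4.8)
The plan is to collapse the two-term objective into a clean single-variable ratio through the substitution $y = 1 + P_t(r) r^{-\alpha}/\sigma_0^2$ already suggested in the statement. Inverting gives $P_t(r) = (y-1)\sigma_0^2 r^{\alpha}$, so the denominator $W\log_2(1+P_t(r) r^{-\alpha}/\sigma_0^2)$ becomes simply $W\log_2 y$. First I would substitute this into $E_c(r)$ and, using the identity $\epsilon + 1 = r^{-\alpha}\eta P_c/\sigma_0^2$ to merge the circuit-power constant with the linear transmit-power term, factor out the positive $y$-independent constant $F\sigma_0^2 r^{\alpha}/(W\eta)$. What remains to be minimized is $g(y) = (y+\epsilon)/\log_2 y$ over the interval $y \in (1, y_0]$, where $y=1$ corresponds to $P_t \to 0$ and $y = y_0$ to $P_t = P_{\max}$.

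Next I would carry out a first-order analysis of $g$. Differentiating the ratio, the sign of $g'(y)$ coincides with the sign of $h(y) := y(\ln y - 1) - \epsilon$. Since $h'(y) = \ln y > 0$ for $y > 1$, the function $h$ is strictly increasing on the domain, and $h(1) = -1 - \epsilon < 0$ because all physical parameters are positive, which forces $\epsilon > -1$. Hence $h$ possesses a unique root $y^\star > 1$, and $g$ is strictly decreasing on $(1, y^\star)$ and strictly increasing beyond it; that is, $g$ is unimodal with a single unconstrained minimizer $y^\star$ characterized by $h(y^\star)=0$, i.e. $y^\star(\ln y^\star - 1) = \epsilon$.

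It then remains only to locate the constrained minimum over $(1, y_0]$. By unimodality the minimizer is the interior point $y^\star$ whenever $y^\star \le y_0$, and the right endpoint $y_0$ (equivalently $P_t^*(r) = P_{\max}$) whenever $y^\star > y_0$. Because $h$ is increasing with root $y^\star$, the inequality $y^\star > y_0$ is equivalent to $h(y_0) < 0$, i.e. $y_0(\ln y_0 - 1) < \epsilon$; exponentiating and using $2^{\epsilon/\ln 2} = e^{\epsilon}$ rewrites this as $(y_0/e)^{y_0} < 2^{\epsilon/\ln 2}$, which is exactly the stated threshold for the maximal-power solution. In the complementary regime the optimum satisfies the interior condition $y^\star(\ln y^\star - 1) = \epsilon$, and the same exponentiation turns it into $(y^\star/e)^{y^\star} = 2^{\epsilon/\ln 2}$, matching the claimed optimality equation.

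The genuinely delicate step is the second one: establishing that $g$ is unimodal rather than merely having a stationary point. The cleanest route is through the auxiliary function $h$, whose monotonicity ($h' = \ln y$) immediately delivers both the uniqueness of the stationary point and the decreasing-then-increasing shape, thereby justifying the boundary-versus-interior dichotomy without appealing to full convexity of $g$. Once this is in place, the translation between the logarithmic optimality condition and its exponentiated form $(y/e)^y = 2^{\epsilon/\ln 2}$ is routine algebra.
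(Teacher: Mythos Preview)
Your argument is correct and follows essentially the same route as the paper's proof: both reduce the objective to a single-variable function of $y=1+P_t(r)r^{-\alpha}/\sigma_0^2$, differentiate, and study the sign of the numerator via the auxiliary function $y\ln y - y - \epsilon$, whose derivative $\ln y>0$ yields the unimodal shape and hence the boundary-versus-interior dichotomy. Your presentation is in fact slightly cleaner---you write the reduced objective as $(y+\epsilon)/\log_2 y$ and correctly evaluate $h(1)=-1-\epsilon<0$, whereas the paper's appendix contains a minor slip on this limit---but the substance is identical.
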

\begin{proof}
See Appendix B.
\end{proof}

With Prop. 2, the optimal transmit power $P^*_t(r)$ can be obtained efficiently.
By substituting $P^*_t(r)$ to \eqref{ECr}, the minimal energy consumption for the D2D link with distance $r$ can be obtained as $E_c^*(r)$.

To evaluate minimal energy cost for a DT to transmit a file over D2D links with different distances, we need to obtain the distribution of $r$ when the optimal caching policy is employed. Similar to (\ref{equ.p_f}), the cumulative
distribution function of the distance between the DR requesting the $i$th file and its nearest DT in the $i$th
\emph{helper set} with optimized caching policy can be obtained as $F(p^*_c(i),r)=1-e ^{-\lambda p^*_c(i) \pi r^2}$.
Therefore, the probability density function of the D2D link distance can be obtained  as
\begin{equation} \label{equ.pdf}
f(p^*_c(i),r)=\frac{\text{d}F(p^*_c(i),r)}{\text{d}r}=2\pi r \lambda p^*_c(i) e^{-\lambda p^*_c(i) \pi r^2}.
\end{equation}
Then, for a given collaboration distance $r_c$, the average energy consumed at the DT with the optimized transmit power can be obtained  as
\begin{equation} \label{equ.E_cost}
\begin{split}
&\bar{E}^*_c(r_c) = \sum_{i=1}^{N_f} p_r(i) \int_{0}^{r_{c}} E^*_c(r)f(p^*_c(i),r)dr \\
& = \int_{0}^{r_{c}} E^*_c(r) {p}_o'(p^*_c(i),r_c)dr
\end{split}
\end{equation}
where ${p}_o'(p^*_c(i),r_c)$ is the first-order derivative of  $ {p}_o(p^*_c(i),r_c)$ with respect to $r_c$, and $ {p}_o(p^*_c(i),r_c)$ is obtained from \eqref{equ.p_o} with the optimized caching distribution.


\vspace{2mm}\begin{proposition} \label{p:3}
Both the maximal offloading ratio $ {p}_o(p^*_c(i),r_c)$ and the minimal average energy cost $\bar{E}^*_c(r_{c})$
increase with the  collaboration distance $r_{c}$.

\begin{proof}
See Appendix C.
\end{proof}

\vspace{2mm}Prop. 3 implies that there is a tradeoff between the traffic offloading and the energy
consumption.


\end{proposition}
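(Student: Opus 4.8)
The plan is to prove the two monotonicity claims separately, after first recording that the per-link minimal energy $E^*_c(r)$ is itself nondecreasing in $r$. For this, note that for every fixed $P_t\in(0,P_{\max}]$ the objective of (\ref{equ.opt2}) increases with $r$, since a larger $r$ shrinks $r^{-\alpha}$, hence the logarithm, hence raises $E_c(r)$. As the feasible interval is independent of $r$, the minimized value inherits this: for $r_1<r_2$ we get $E^*_c(r_1)\le E_c\big(r_1,P^*_t(r_2)\big)\le E_c\big(r_2,P^*_t(r_2)\big)=E^*_c(r_2)$, and moreover $E^*_c(r)\to 0$ as $r\to 0$. Both facts are used below.

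\textbf{Offloading ratio.} Here an envelope argument suffices, because the feasible set of (\ref{equ.opt1}) does not depend on $r_c$ while each summand $1-e^{-\lambda p_c(i)\pi r_c^2}$ strictly increases in $r_c$. Thus for $r_c^{(1)}<r_c^{(2)}$, inserting the caching vector optimal for $r_c^{(1)}$ into the objective evaluated at $r_c^{(2)}$ already raises the value, and the optimum at $r_c^{(2)}$ dominates it; hence $p_o(p^*_c(i),r_c)$ increases. Equivalently, the envelope theorem gives $\frac{d}{dr_c}p_o=\sum_i p_r(i)\,2\lambda\pi r_c\,p^*_c(i)e^{-\lambda p^*_c(i)\pi r_c^2}>0$, so no control of $dp^*_c/dr_c$ is needed.

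\textbf{Energy cost.} This is the substantive part. I would differentiate (\ref{equ.E_cost}) by Leibniz's rule, keeping the \emph{double} dependence on $r_c$, through the upper limit and through the weights $p^*_c(i)=p^*_c(i;r_c)$. The boundary term $E^*_c(r_c)\sum_i p_r(i)f(p^*_c(i),r_c)$ is strictly positive. The obstacle is the interior term $\sum_i p_r(i)\,\frac{dp^*_c(i)}{dr_c}\int_0^{r_c}E^*_c(r)\,\frac{\partial f}{\partial p}\big|_{p^*_c(i)}\,dr$, whose sign is not obvious: by Prop.\ \ref{p:1} the optimal caching \emph{flattens} as $r_c$ grows, so $dp^*_c(i)/dr_c$ is negative for popular files and positive for rarely requested ones, and consequently the per-file expected energy is \emph{not} monotone (concentrating an unpopular file's helpers near the origin actually shortens its links).

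To control this term I would argue in aggregate rather than file by file, exploiting two facts already available: the KKT stationarity in (\ref{equ.Lag_3}), which equalizes the marginal offloading gain and makes $p_r(i)e^{-\lambda p^*_c(i)\pi r_c^2}$ constant over the cached support, and the normalization $\sum_i dp^*_c(i)/dr_c=0$ inherited from $\sum_i p^*_c(i)=1$. A convenient reduction is the substitution $t=\lambda\pi p^*_c(i)r^2$, which turns the per-file integral into $\int_0^{b_i}E^*_c\big(\sqrt{t/(\lambda\pi p^*_c(i))}\big)e^{-t}\,dt$ with $b_i=\lambda\pi p^*_c(i)r_c^2$; using the closed form of $p^*_c$ one checks that for the frequently requested files both $b_i$ and the argument of $E^*_c$ grow with $r_c$, so their large-weight contributions rise, while the shortening occurs only at files carrying small $p_r(i)$. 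The crux, and the step I expect to be hardest, is showing this $p_r$-weighted redistribution is net nonnegative: since $\bar E^*_c$ is evaluated \emph{at} the optimizer rather than being an optimal value, the envelope shortcut is unavailable, and the increasing behaviour of $E^*_c$ must be combined with the explicit caching formula to pin down the sign of the interior term.
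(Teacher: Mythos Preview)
Your treatment of the offloading ratio is correct and coincides with the paper's: freeze the caching vector optimal at the smaller radius, increase $r_c$ (each summand of \eqref{equ.p_o} rises), then reoptimize.

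For the energy cost you correctly isolate the difficulty---the interior term coming from $dp^*_c/dr_c$---but you stop short of resolving it, and the tools you reach for (the per-file substitution $t=\lambda\pi p^*_c(i)r^2$, the explicit formula of Proposition~\ref{p:1}, KKT equalization) are heavier than what is actually needed. The paper does not differentiate through $p^*_c$ at all. It compares two radii $r_c<r_c'$ directly, discards the manifestly positive piece $\int_{r_c}^{r_c'}E^*_c(r)\,p_o'(p^*_c,r)\,dr$, and is left with
\[
\int_0^{r_c'} E^*_c(r)\,\Delta p_o(r)\,dr,\qquad \Delta p_o(r)=p_o'\big((p^*_c)',r\big)-p_o'(p^*_c,r),
\]
where $p_o'$ is the $r$-derivative and already aggregates over all files. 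The missing structural observation is that this aggregated $\Delta p_o$ has a single sign change on $(0,r_c')$: negative on $(0,\xi)$ and nonnegative on $(\xi,r_c')$. Together with the monotonicity of $E^*_c$ you established, a Chebyshev-type replacement yields
\[
\int_0^{r_c'} E^*_c(r)\,\Delta p_o(r)\,dr \;\ge\; E^*_c(\xi)\int_0^{r_c'}\Delta p_o(r)\,dr
\;=\; E^*_c(\xi)\Big[p_o\big((p^*_c)',r_c'\big)-p_o\big(p^*_c,r_c'\big)\Big]\;\ge\;0,
\]
the last step being precisely the optimality of $(p^*_c)'$ at $r_c'$. So optimality \emph{is} the engine, not through an envelope identity (as you rightly note is unavailable) but through the integral of $\Delta p_o$. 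The crux you flag as ``hardest'' is thus dispatched by the single-crossing lemma plus the rearrangement bound, with no need for the closed-form caching distribution.
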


\section{Simulation and Numerical Results}
In this section, we evaluate the accuracy of the approximations and the energy consumption at a DT  for offloading via simulation and numerical results.

To show the impact of the optimized caching policy and transmit power, we consider uniform
caching policy (i.e., all users select a file from the catalog uniformly) as a \emph{caching baseline}, and a transmit policy always using the  maximal transmit power  as the \emph{transmission baseline}.

We consider a square cell with side length $1000$m. The users' location  follows PPP distribution with $\lambda = 0.03$, then there are $2
\sim 3$ users in a $10\text{m}\times 10\text{m}$ area. The
path-loss model is $37.6+36.8\log_{10}(r)$, where $r$ is the distance of D2D link \cite{JMY.JSAC}. $W=20$ MHz and
$\sigma_0^2 = -95 $ dBm. The file catalog is with $N_f=1000$ files, where each file is with size of $30$ Mbytes (i.e.,
a typical video size on the Youtube website \cite{Mo.Mag13}). The parameter of Zipf distribution $\beta=1$. The
maximal transmit power at DT is $P_{\max}=23 $ dBm, the amplifier efficiency $\eta = 0.2$, and the circuit power is
$P_c=115.9 $ mW \cite{KV.TWC}.

To reflect how much average energy consumed at a DT to transmit a file occupies the battery capacity, we
evaluate an \emph{energy cost ratio} as follows,
\begin{equation} \label{equ.E_ratio}
\bar{p}_E = \frac{\bar{E}_c(r_c)}{3.6V_0Q},
\end{equation}
where  $Q$ is the  battery capacity in mAh, and $V_0$ is  the operating voltage in V.

The operating voltage at user is set as $V_0 = 4V$ and the battery capacity $Q = 1800 $ mAh
(typical for iPhone 6).

\begin{figure}[!htb]
  \centering
  \includegraphics[width=0.45\textwidth]{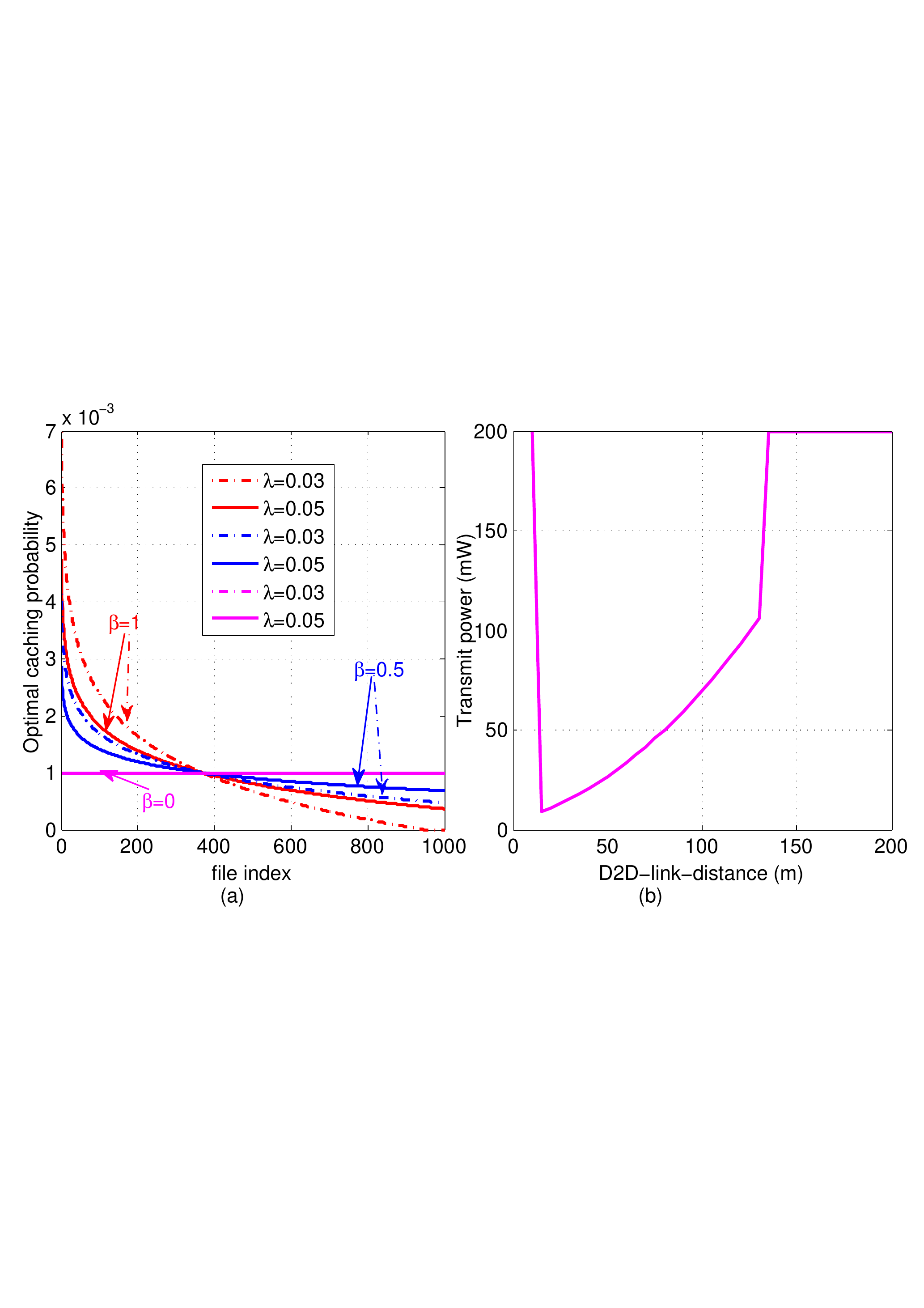}\\
  \caption{ Optimal caching distribution and optimal transmit power}\label{fig.3}
  \vspace{-0.45cm}
\end{figure}
In Fig. \ref{fig.3}, we show the optimal caching distribution and the optimal transmit power versus the D2D link distance $r$. We can see from Fig. \ref{fig.3}(a) that the optimal caching distribution is similar
to the Zipf distribution, where the file with smaller index (i.e., more popular) has higher probability to be cached, which
agrees with the intuition. With the increase of $\beta$ and $\lambda$, the caching probability for popular files increases. We can see from Fig. \ref{fig.3}(b) that the optimal
transmit power for very small and large D2D link distance $r$ is the maximal transmit power. This is because when $r$
is small, the transmit duration can be reduced and the energy cost can be minimized with $P_{\max}$.
When $r$ is large, the circuit power and transmit power are balanced to minimize the energy cost with $P_{\max}$.
\begin{figure}[!htb]
  \centering
  \includegraphics[width=0.45\textwidth]{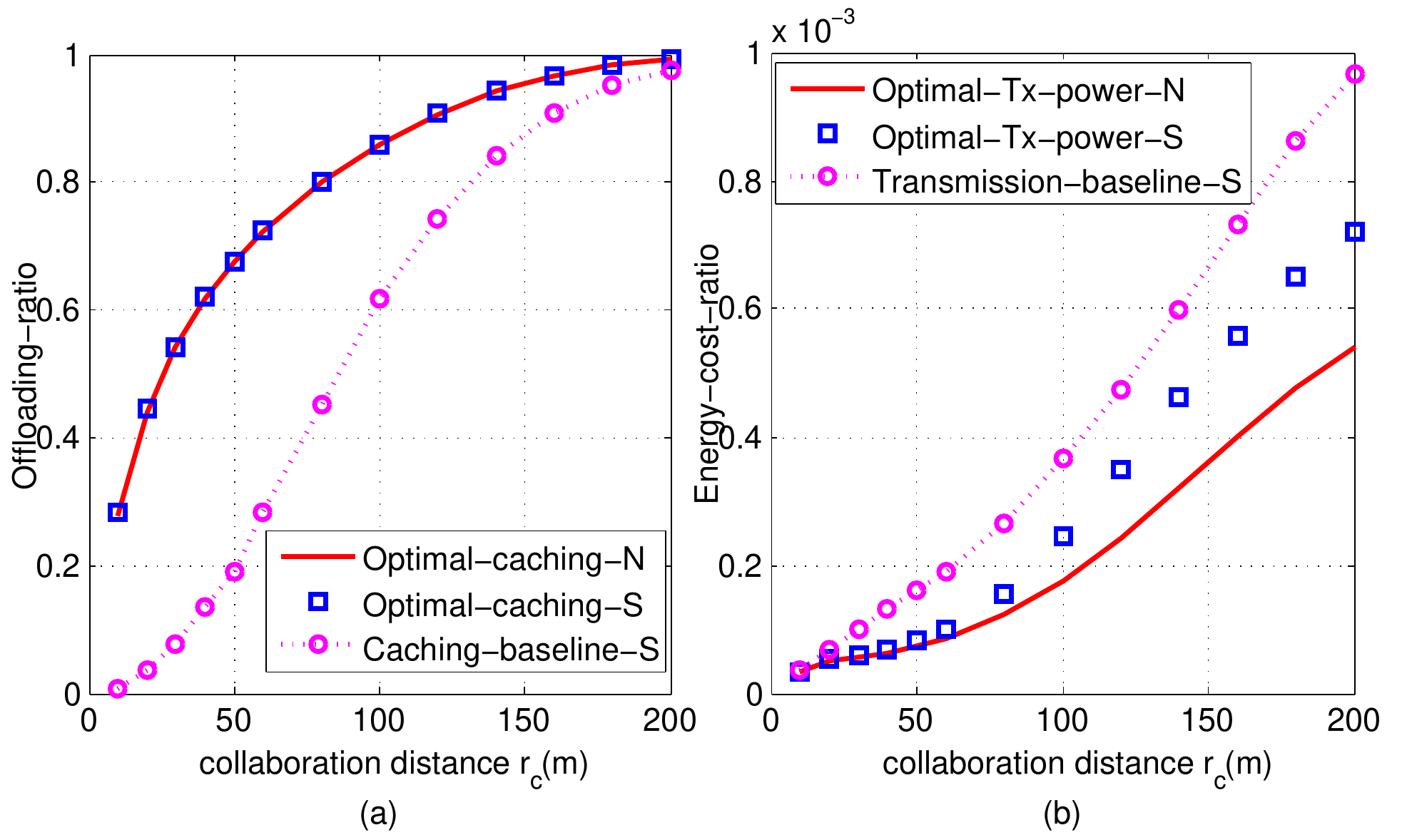}\\
  \caption{ Offloading ratio and average energy costs versus $r_c$, $\beta=1$ $\lambda=0.03$, S-Simulation results, N-Numerical results}\label{fig.4}
  \vspace{-0.45cm}
\end{figure}

In Fig. \ref{fig.4}, we show the offloading ratio and the energy cost ratio versus the collaboration distance $r_c$, where the energy consumption at each DT is computed for the cache-enabled D2D communications with optimized caching distribution. The simulation results
are close to numerical results when $r_c<100$ m, which indicates that the approximated energy consumption is accurate for smaller collaboration distance. As expected, the
optimized caching policy can offload more traffic than the baseline. Even when $r_c=10$ m, more than $20\%$ traffic
can be offloaded with the optimal caching policy. When $r_c$ is large, e.g., $r_c=200$ m, the offloading ratios of both the optimal
and baseline caching policy approach to one, since adjacent users for each user is abundant to cache almost all the
files in the catalog.
The optimized transmit power consumes less energy than always using the maximal transmit power, especially
when the collaboration distance is large. From this figure, we can pre-determine the collaboration distance to control the energy cost at each helper into an affordable level.

\begin{figure}[!htb]
  \centering
  \includegraphics[width=0.45\textwidth]{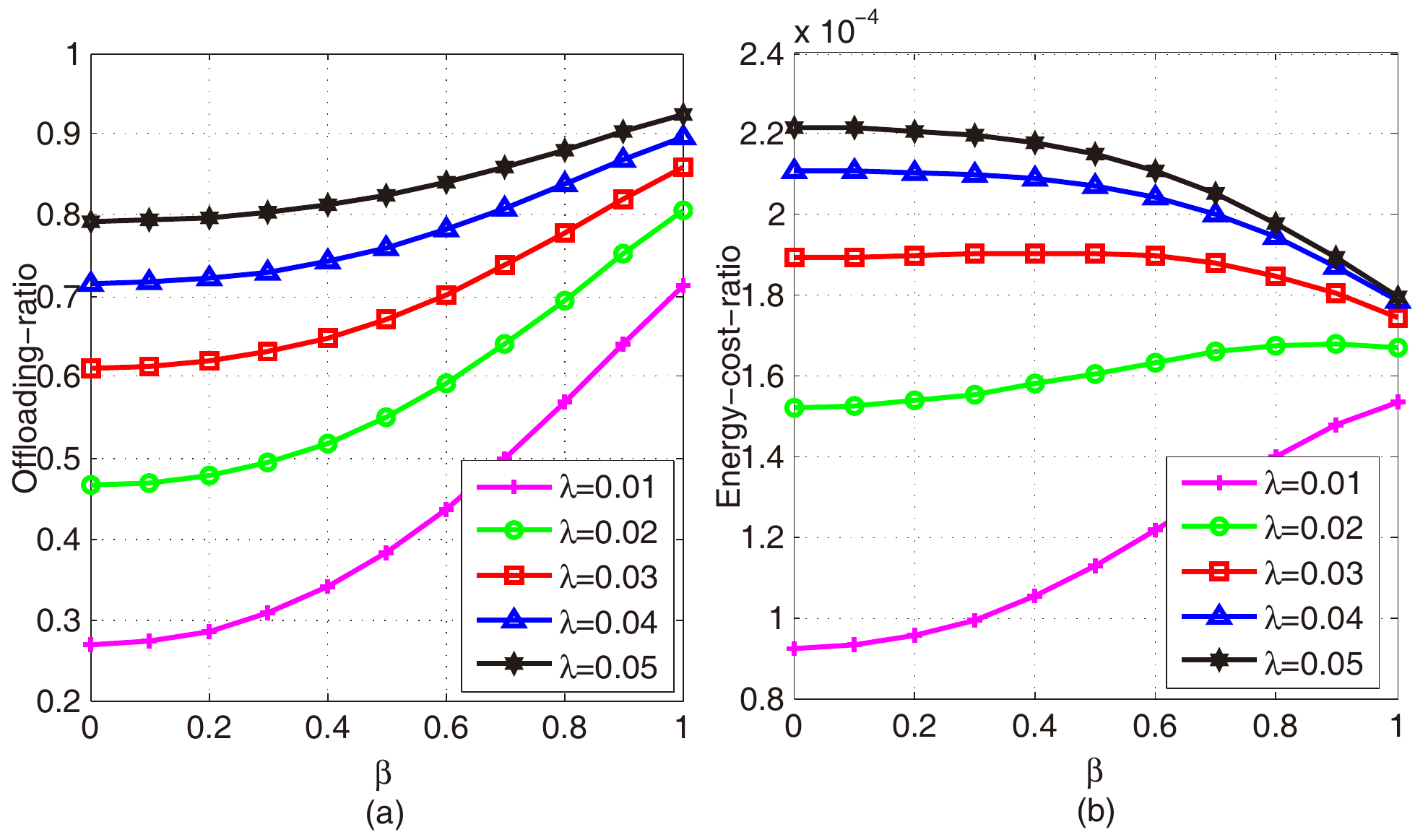}\\
  \caption{ Offloading ratio and average energy costs versus $\beta$ and $\lambda$, $r_c=100$m}\label{fig.5}
  \vspace{-0.45cm}
\end{figure}

In Fig. \ref{fig.5}, we show the offloading ratio with optimal caching policy and the energy cost ratio with optimal transmit power versus Zipf distribution parameter
$\beta$ and user density $\lambda$. With the increase of $\lambda$, a user has more adjacent users, therefore, the probability that a requested
file can be transmitted via D2D link increases.
As a result, the  energy cost ratio increases. The offloading ratio increases with $\beta$, and grows rapidly for heavy traffic load. However,
with the growth of $\beta$, the energy cost ratio increases when $\lambda$ is small but decreases when $\lambda$ is large.

In Fig. \ref{fig.6}(a), we show the tradeoff between the offloading ratio and the energy cost ratio with
different caching and power control policies, where the offloading ratio is adjusted by changing the collaboration distance $r_{c}$ from $10$m to $100$m. We can see that with the optimized caching policy and transmit power,
the energy cost ratio increases with offloading ratio slowly, and is much lower than other policies.
To offload $80\%$ traffic, the energy
consumption at each DT only occupies $0.02\%$ battery capacity. This suggests that D2D communications with caching will be cost-efficient for offloading if the caching and transmission policies are judiciously designed. 

In the sequel, we show the impact of several key factors on the tradeoff between the offloading ratio and the energy cost ratio with the optimized caching policy and transmit power.
\begin{figure}[!htb]
  \centering
  \includegraphics[width=0.5\textwidth]{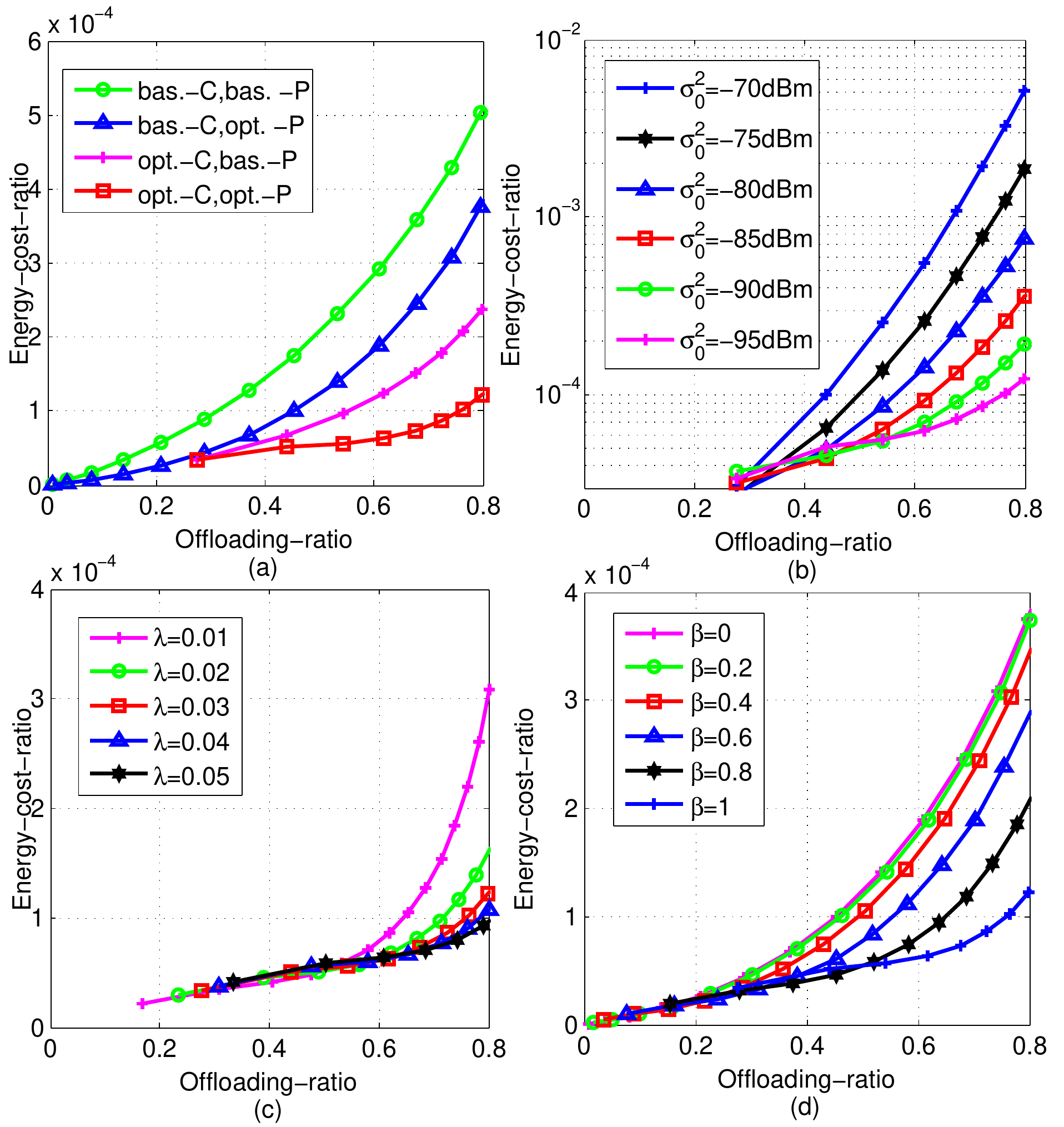}\\
  \caption{Tradeoff  between offloading ratio and energy cost ratio,
  ``opt.'': optimal, ``bas.'': baseline, ``C'': caching, ``P'': power control, where $\lambda=0.03$, $\beta=1$, $\sigma_0^2=-95$dBm if not specified.}\label{fig.6}
  \vspace{-0.45cm}
\end{figure}

In Fig. \ref{fig.6}(b), we show the impact of interference level. As expected, with
the increase of interference level, to achieve the same offloading ratio, more energy needs to be consumed at each DT. When $\sigma_0^2=-70$ dBm, to achieve an $80\%$ offloading ratio, the
energy cost ratio is $1\%$. This suggests the importance of controlling the interference among D2D links.

In Fig. \ref{fig.6}(c), we show the impact of  the users density $\lambda$.  As $\lambda$ increases,
the number of adjacent users of each user increases, hence less energy cost is consumed to achieve the same offloading ratio,
resulting in the reduction of the energy cost ratio. Moreover, the increasing
speed of energy cost with the offloading ratio becomes slowly when $\lambda$ is large. This suggests that it is more efficient in terms of the energy cost of each DT for offloading by cache-enabled D2D communications in the network with heavy traffic load.

In Fig. \ref{fig.6}(d), we show the impact of Zipf distribution $\beta$. As expected,
with the increase of $\beta$, less energy is consumed to achieve a given offloading ratio. To achieve an
$80\%$ traffic offloading ratio, the energy cost of each DT for the scenario with $\beta=1$ is only $25\%$ of that with $\beta=0$.

\section{Conclusion}
In this paper, we strive to evaluate the energy consumed at a helper user to support  traffic offloading for a network by cache-enabled D2D
communications. We first optimized a user-centric proactive caching policy, with which the traffic can be maximally offloaded and the energy consumed for transmission can be controlled by a collaboration distance. We then optimized the transmit power to convey a file via D2D link, aimed to minimize the energy consumption at the helper user for any given caching policy. With the optimal caching distribution and optimal transmit power, we investigated the
tradeoff  between traffic offloading and energy cost. Simulation and numerical results showed that the traffic can be significantly offloaded by the cache-enabled D2D links with
little energy costs at each help user.

\vspace{-0.30cm}
\section*{Appendix $A$: Proof of Proposition \ref{p:1}}
Denote
\begin{equation} \label{xi}
x_i \triangleq \frac{\ln(p_r(i))}{\lambda \pi r_{c}^2},~~ \text{and}\quad
 v \triangleq \frac{1}{\lambda \pi
r_{c}^2}\ln(\frac{-\mu}{\pi \lambda r_{c}^2}).
\end{equation}
Then, by substituting the first condition
into the second condition, the necessary condition  in  (\ref{equ.Lag_3}) can be rewritten as
\begin{equation} \label{equ.proof_1.1}
\sum_{i=1}^{N_f} [x_i-v]^+ =1.
\end{equation}

Since problem \eqref{equ.opt1} is convex, the solution of $v$  found from this condition is global  optimal, with which the optimal caching distribution can be obtained from  (\ref{equ.Lag_3}).

Because $p_r(i)$ is an decreasing function of file index $i$  from (\ref{equ.p_r}), and $p^*_c(i)$ decreases with $p_r(i)$ from
\eqref{equ.Lag_3}, $p^*_c(i)$ is an decreasing function of $i$. Thus, there exists a unique  file index $i^*\leq N_f$, where
$p^*_c(i)>0$ when $i \le i^*$, $p^*_c(i)= 0$ otherwise. Finding the solution of $v$ from  \eqref{equ.proof_1.1} is equivalent to finding the index $i^*$ from
\begin{equation} \label{equ.proof_1.2}
\sum_{i=1}^{i^*} (x_i-v) =1,
\end{equation}
since once $i^*$ is found, $v^*$ can be obtained as
\begin{equation} \label{equ.proof_1.3}
v^* = \frac{\sum_{i=1}^{i^*}x_i - 1}{i^*}.
\end{equation}
\begin{itemize}
\item Case 1: When $i^*=N_f$, $p^*_c(N_f)=x_{N_f}-v > 0$, which can be rewritten as $\sum_{i=1}^{N_f}(x_i - x_{N_f})
    < 1$ after substituting $v$ in (\ref{equ.proof_1.3}). Considering \eqref{xi} and (\ref{equ.p_r}), we have
\begin{equation} \label{equ.proof_1.4}
\begin{split}
&\sum_{i=1}^{N_f}(x_i - x_{N_f}) = \sum_{i=1}^{N_f} \frac{\ln(p_r(i))-\ln(p_r(N_f))}{\lambda \pi r_{c}^2}\\
& {=} \frac{\beta}{\lambda \pi r_{c}^2} \sum_{i=1}^{N_f} \ln(N_f/i) = \frac{\beta}{\lambda \pi r_{c}^2}\ln(\frac{N_f^{N_f}}{N_f!})< 1,
\end{split}
\end{equation}
which can be rewritten as $\frac{(N_f)^{N_f}}{N_f!} < e^{\frac{\lambda \pi r_{c}^2}{\beta}}$. When this inequality holds,
$i^*=N_f$.

By substituting $v^*$ in (\ref{equ.proof_1.3}) into (\ref{equ.Lag_3}), the optimal caching distribution can be
derived as
\begin{equation} \label{equ.proof_1.5}
p^*_c(i) = \frac{\beta}{\lambda \pi r_{c}^2 N_f}\sum_{j=1}^{N_f}\ln(\frac{j}{i})+\frac{1}{N_f}.
\end{equation}
\item Case 2: When $i^*<N_f$, $p^*_c(i^*) = x_{i^*} - v >0$ and $x_{i^*+1} - v \leq 0$. By
    substituting $v$ in (\ref{equ.proof_1.3}) into these two inequalities, we have
\begin{equation} \label{equ.proof_1.6}
\begin{split}
\sum_{i=1}^{i^*}(x_i - x_{i^*+1}) \geq 1, \quad \sum_{i=1}^{i^*}(x_i - x_{i^*}) < 1
\end{split},
\end{equation}
which can be further derived by considering $p_r(i)$ in (\ref{equ.p_r}) and $x_i$ in \eqref{xi} as
\begin{equation} \label{equ.proof_1.7}
\begin{split}
\frac{\beta}{\lambda \pi r_{c}^2}\ln(\frac{(i^*+1)^{i^*}}{i^*!})\geq 1, \quad \frac{\beta}{\lambda \pi r_{c}^2}\ln(\frac{(i^*)^{i^*}}{i^*!}) < 1.
\end{split}
\end{equation}
Then, $i^*$ satisfies $\frac{(i^*+1)^{i^*}}{i^*!} \geq e^{\frac{\lambda \pi r_{c}^2}{\beta}}$ and
$\frac{(i^*)^{i^*}}{i^*!} < e^{\frac{\lambda \pi r_{c}^2}{\beta}}$. By
substituting  $v$ in (\ref{equ.proof_1.3}) into (\ref{equ.Lag_3}), we obtain
\begin{eqnarray}
\label{equ.proof_1.8}
p^*_c(i)=
\begin{cases}
\frac{\beta}{\lambda \pi r_{c}^2 i^*}\sum_{j=1}^{i^*}\ln(\frac{j}{i})+\frac{1}{i^*}, &i \leq i^* \\
0, &i^*<i \leq N_f.\\
\end{cases}
\end{eqnarray}

With Stirling formula \cite{SF.Approx}, $\sqrt{2\pi n}(\frac{n}{e})^n < n! < \sqrt{2\pi
n}(\frac{n}{e})^ne^{\frac{1}{12}}$, (\ref{equ.proof_1.7}) can be further derived  as
\begin{equation} \label{equ.proof_1.9}
\begin{split}
i^* - \ln(\sqrt{2 \pi i^*} e^{\frac{1}{12}}) < \frac{\lambda \pi r_{c}^2}{\beta}, i^* + 1 - \ln(\sqrt{2 \pi i^*}) > \frac{\lambda \pi r_{c}^2}{\beta}.\nonumber
\end{split}
\end{equation}

Since $i^*<N_f$, $\ln(\sqrt{2 \pi i^*}) < \ln(\sqrt{2 \pi N_f})$, we can obtain the range of $i^*$ as $\frac{\lambda
\pi r_{c}^2}{\beta} -1 \leq i^* \leq \frac{\lambda \pi r_{c}^2}{\beta} + \ln(\sqrt{2\pi N_f}) + 1$.
\end{itemize}

\section*{Appendix $B$: Proof of Proposition \ref{p:2}}
By denoting $z=P_t(r)$, $a = \frac{r^{-\alpha}}{\sigma_0^2}$, $b=\eta P_c$ and $A = \frac{F}{W\eta}$, the objective
function in (\ref{equ.opt2}) can be rewritten as
\begin{equation} \label{equ.proof_2.1}
f(z) = A\frac{z+b}{\log_2(1+az)},
\end{equation}
where $z,a,b>0$ and $z\leq P_{max}$. By taking the first-order derivative of $f(z)$, we obtain
\begin{equation} \label{equ.proof_2.2}
\begin{split}
&f'(z) = A\frac{((1+az)\log_2(1+az)\ln2-(1+az)-\epsilon)}{(1+az)\log_2^2(1+az)\ln2 }.
\end{split}
\end{equation}
Denote $g(y)=  y\log_2(y)\ln2-y-\epsilon$, where $y=1+az, 1<y\leq 1+aP_{max} \triangleq y_{0}$, and $\epsilon =
\frac{r^{-\alpha}\eta P_c}{\sigma_0^2}-1$. We can see that if $g(y)>0$, then $f'(z)>0$, and vice versa.

By taking the first-order derivative of $g(y)$, we obtain $g'(y) = \log_2(y)\ln2 > 0 $ due to $y>1$. Therefore, $g(y)$ is a
increasing function of $y$, when $y \rightarrow 1$, $g(y) \rightarrow 0$. If $g(y_{0}) \leq 0$, $g(y)$ is always less
than zero since $1<y<y_{0}$, then $f(z)$ will be a decreasing function, i.e., $f(z)$ achieves its minimum when
$z=P_{max}$. Otherwise, $f(z)$  first decreases and then increases, and  achieves its minimum when $g(y)=0$.

In summary, if $g(1+\frac{P_{max}r^{-\alpha}}{\sigma_0^2}) \leq 0$, $P^*_t(r) = P_{max}$. Otherwise, we can find
$P^*_t(r)$ efficiently by  searching from $(\frac{y}{e})^y = 2^{{\varepsilon}/{\ln2}}$, which is obtained from $g(y)=0$.
\vspace{-0.30cm}
\section*{Appendix $C$: Proof of Proposition \ref{p:3}}
Consider two collaboration distances $r_{c}'>r_{c}$, and the corresponding optimal caching distribution are respectively $p^*_c(i)'$ and
$p^*_c(i)$. We can see from (\ref{equ.p_o}) that $p_o(p^*_c(i),r)$ is an increasing function of $r$ $(0 < r \leq r_c)$.
  Then, $p_o(p^*_c(i)',r_c') \geq p_o(p^*_c(i),r_c') \geq p_o(p^*_c(i),r_c)$. Thus, $p_o(p^*_c(i),r_c)$
is an increasing function of $r_{c}$.

By subtracting $\bar{E}^*_c(r_c)$ from $\bar{E}^*_c(r_c')$, we have
\begin{equation} \label{equ.proof_3.1}
\begin{split}
&\bar{E}^*_c(r_c') - \bar{E}^*_c(r_c) \\
&= \int_{0}^{r_{c}'} E^*_c(r) p_o'(p^*_c(i)',r) dr - \int_{0}^{r_{c}} E^*_c(r) p_o'(p^*_c(i),r) dr\\
&\geq \int_{0}^{\xi} E^*_c(r)\Delta {p}_o dr + \int_{\xi}^{r_{c}'} E^*_c(r)\Delta {p}_odr\\
& \geq E^*_c(\xi) (\int_{\xi}^{r_{c}'} \Delta {p}_odr + \int_{0}^{\xi} \Delta {p}_odr)\\
& = E^*_c(\xi)( \int_{0}^{r_c'}  p_o'(p^*_c(i)',r) dr - \int_{0}^{r_c'}  p_o'(p^*_c(i),r) dr )\\
& = E^*_c(\xi)( p_o(p^*_c(i)',r)- p_o(p^*_c(i),r)) \geq 0.\nonumber
\end{split}
\end{equation}
where $\Delta{p}_o =  p_o'(p^*_c(i)',r) - p_o'(p^*_c(i),r)$, $0<\xi\leq r_c'$. The first inequality comes from  changing the
 integral upper limit  $r_c$ to $r_c'$. The second inequality is because $E^*_c(r)$ is an increasing function
of $r$ and by taking the partial derivative of $\Delta {p}_o \geq 0$ with respect to $r_c$. With tedious
derivation, we can show that there exist $\xi$, when $r_c<\xi$, $\Delta {p}_o \leq 0$, resulting in
$\int_{0}^{\xi} E^*_c(r)\Delta {p}_o dr \geq E^*_c(\xi)\int_{0}^{\xi} \Delta {p}_odr$, and when $r_c \geq \xi$,
$\Delta {p}_o \geq 0$, resulting in $\int_{\xi}^{r_{c}'} E^*_c(r)\Delta {p}_o dr \geq E^*_c(\xi)\int_{\xi}^{r_{c}'}
\Delta {p}_odr$. Thus, $\bar{E}_c(r_c)$ is an increasing function of $r_c$.

\bibliographystyle{IEEEtran}
\bibliography{CBQ_15}

\begin{thebibliography}{10}
\providecommand{\url}[1]{#1}
\csname url@samestyle\endcsname
\providecommand{\newblock}{\relax}
\providecommand{\bibinfo}[2]{#2}
\providecommand{\BIBentrySTDinterwordspacing}{\spaceskip=0pt\relax}
\providecommand{\BIBentryALTinterwordstretchfactor}{4}
\providecommand{\BIBentryALTinterwordspacing}{\spaceskip=\fontdimen2\font plus
\BIBentryALTinterwordstretchfactor\fontdimen3\font minus
  \fontdimen4\font\relax}
\providecommand{\BIBforeignlanguage}[2]{{%
\expandafter\ifx\csname l@#1\endcsname\relax
\typeout{** WARNING: IEEEtran.bst: No hyphenation pattern has been}%
\typeout{** loaded for the language `#1'. Using the pattern for}%
\typeout{** the default language instead.}%
\else
\language=\csname l@#1\endcsname
\fi
#2}}
\providecommand{\BIBdecl}{\relax}
\BIBdecl

\bibitem{Survey.D2D}
A.~Asadi, Q.~Wang, and V.~Mancuso, ``A survey on device-to-device communication
  in cellular networks,'' \emph{IEEE Commun. Surveys Tuts.}, vol.~16, no.~4,
  pp. 1801--1819, 2014.

\bibitem{CISCO}
{Cisco Visual Networking}, ``Global mobile data traffic forecast update
  2014–-2019,'' \emph{White Papers}, 2015.

\bibitem{Mo.Mag13}
N.~Golrezaei, A.~F. Molisch, A.~G. Dimakis, and G.~Caire, ``Femtocaching and
  device-to-device collaboration: A new architecture for wireless video
  distribution,'' \emph{IEEE Commun. Mag.}, vol.~51, no.~4, pp. 142--149, 2013.

\bibitem{Golrezaei.TWC}
N.~Golrezaei, P.~Mansourifard, A.~Molisch, and A.~Dimakis, ``Base-station
  assisted device-to-device communications for high-throughput wireless video
  networks,'' \emph{IEEE Trans. Wireless Commun.}, vol.~13, no.~7, pp.
  3665--3676, 2014.

\bibitem{JMY.JSAC}
M.~Ji, G.~Caire, and A.~Molisch, ``Wireless device-to-device caching networks:
  Basic principles and system performance,'' \emph{IEEE J. Sel. Areas Commun.},
  vol.~34, no.~1, pp. 176--189, 2016.

\bibitem{JJJ.JSAC}
J.~Jiang, S.~Zhang, B.~Li, and B.~Li, ``Maximized cellular traffic offloading
  via device-to-device content sharing,'' \emph{IEEE J. Sel. Areas Commun.},
  vol.~34, no.~1, pp. 82--91, 2016.

\bibitem{Andreev.ICM}
S.~Andreev, A.~Pyattaev, K.~Johnsson, O.~Galinina, and Y.~Koucheryavy,
  ``Cellular traffic offloading onto network-assisted device-to-device
  connections,'' \emph{IEEE Commun. Mag.}, vol.~52, no.~4, pp. 20--31, 2014.

\bibitem{Zipf99}
L.~Breslau, P.~Cao, L.~Fan, G.~Phillips, and S.~Shenker, ``Web caching and
  {Zipf-like} distributions: Evidence and implications,'' \emph{IEEE INFOCOM},
  1999.

\bibitem{SKM.PPP}
D.~Stoyan, W.~S. Kendall, J.~Mecke, and L.~Ruschendorf, \emph{Stochastic
  geometry and its applications}.\hskip 1em plus 0.5em minus 0.4em\relax Wiley
  New York, 1987, vol.~2.

\bibitem{SL.OPT}
S.~Boyd and L.~Vandenberghe, \emph{Convex optimization}.\hskip 1em plus 0.5em
  minus 0.4em\relax Cambridge university press, 2004.

\bibitem{Gamma.2011}
R.~W. Heath, T.~Wu, Y.~H. Kwon, and A.~C. Soong, ``Multiuser {MIMO} in
  distributed antenna systems with out-of-cell interference,'' \emph{IEEE
  Trans. Signal Process.}, vol.~59, no.~10, pp. 4885--4899, 2011.

\bibitem{approx}
\BIBentryALTinterwordspacing
A.~Papanicolaou, \emph{Taylor approximation and the delta method}, 2009.
  [Online]. Available:
  \url{http://web.stanford.edu/class/cme308/OldWebsite/\\notes/TaylorAppDeltaMethod.pdf}
\BIBentrySTDinterwordspacing

\bibitem{KV.TWC}
H.~Kim and G.~de~Veciana, ``Leveraging dynamic spare capacity in wireless
  systems to conserve mobile terminals' energy,'' \emph{IEEE/ACM Trans. Netw.},
  vol.~18, no.~3, pp. 802--815, 2010.

\bibitem{SF.Approx}
\BIBentryALTinterwordspacing
{Stirling formula}, \emph{Encyclopaedia of Mathematics}. [Online]. Available:
  \url{http://www.encyclopediaofmath.org/index.php?title=Stirling\_formula\\\&oldid=13618}
\BIBentrySTDinterwordspacing

\end{thebibliography}
\end{document}